\documentclass[ASNA,twocolumn]{USG} 
\usepackage{anyfontsize} %
\allowdisplaybreaks
\usepackage{colortbl}   
\usepackage{xcolor}     
\usepackage{acro}
\usepackage{cases}
\usepackage{empheq}
\usepackage{steinmetz}
\usepackage{enumitem}
\usepackage{bm}
\graphicspath{{./images/}}
\DeclareAcronym{DFS}{short = DFS, long = Depth-first search}
\DeclareAcronym{SEC}{short = SEC, long = subtour elimination constraints}

\DeclareAcronym{MILP}{short = MILP, long = mixed-integer linear programming}

\DeclareAcronym{MAS}{short = MAS, long = multi-agent systems}

\DeclareAcronym{MINLP}{short = MINLP, long = mixed-integer nonlinear programming}

\DeclareAcronym{MIQP}{short = MIQP, long = mixed-integer quadratic programs}

\DeclareAcronym{MIP}{short = MIP, long = mixed-integer programming}

\DeclareAcronym{MPC}{short = MPC, long = Model predictive control}

\usepackage{makecell}
\articletype{RESEARCH ARTICLE}%
\usepackage{parskip}
\begin{document}
\title{Mixed-integer flow formulations for motion planning and decision-making of networked multi-agent systems}
\transtitle{Guidelines for Establishing a Cytometry Laboratory}
\subtranstitle{trans-subtitle}
\author[1]{Angelo Caregnato-Neto}[https://orcid.org/0000-0002-2401-3506]
\author[2]{Paul-Louis Delacour}[https://orcid.org/0009-0006-2191-6589]
\author[2,3]{Raf Van de Plas}[https://orcid.org/0000-0002-2232-7130]
\author[2]{Tamás Keviczky}[https://orcid.org/0000-0002-2428-2300]
\author[1]{Janito Vaqueiro Ferreira}[https://orcid.org/0000-0001-7324-3482]

\authormark{Caregnato-Neto \textsc{et al.}}
\titlemark{Multi-commodity flow formulations for mixed-integer motion planning of networked multi-agent systems}

\address[1]{\orgdiv{Department of Computational Mechanics (DMC), School of Mechanical Engineering (FEM), }\orgname{State University of Campinas (UNICAMP), }\orgaddress{\state{São Paulo, }\country{Brazil}}}

\address[2]{\orgdiv{Delft Center for Systems and Control, }\orgname{Delft University of Technology, }%
\orgaddress{\state{Delft, }\country{Netherlands}}}

\address[3]{\orgdiv{Dept. of Biochemistry, }\orgname{Vanderbilt University, }%
\orgaddress{\state{ Nashville, TN, }\country{USA}}}

\corres{Angelo Caregnato-Neto  (\email{caregnato.neto@ieee.org})}

\keywords{Connectivity maintenance | Mixed-integer programming | Trajectory planning | Task allocation | Multi-commodity flow}

\abstract[ABSTRACT]{
This work investigates the use of flow-based connectivity maintenance constraints in \ac{MILP} trajectory planning and decision-making models for networked \ac{MAS}.
We integrate flow-based encodings for standard and $k$-hop connectivity into MILP multi-vehicle maneuvering models that are widely used alongside receding horizon planning strategies. Their necessity and sufficiency is demonstrated, guaranteeing full coverage of potential network topologies.
The flow formulation for standard connectivity decreases the growth of the required inequality constraints from exponential to polynomial w.r.t. the size of the MAS when compared to the state-of-the-art subtour elimination (SEC) method.  
The flow-based $k$-hop connectivity constraints decrease the number of required binary variables and decouple its growth from the number of hops.
However, the impact of these formulations in performance is not straightforward due to the introduction of a substantial number of continuous flow optimization variables and, in the case of $k$-hop connectivity, additional inequality constraints. We investigate this trade-off through a statistical evaluation of costs and optimization times using a conventional branch-and-bound commercial solver and trials performed with randomized environments for increasingly larger MAS. The results show that the flow formulation outperforms SEC in standard connectivity problems, enabling the solutions to be computed for larger MAS considering the imposed optimization time limit. 
The reduction in number of binary variables enabled by the $k$-hop flow formulations decreases the theoretical worst-case number of iterations required by the branch-and-bound algorithm to compute the global optimal solution. Our results show that this advantage did not translate into improvements in the average performance when compared to the baseline. 
Nevertheless, we discuss problem settings in which the reduction in binary variables enabled by the flow $k$-hop formulation remains advantageous, particularly for solution strategies that decouple binary and continuous variables such as learning-based MILP solvers.
}

\maketitle

\section{Introduction}\label{sec1}

\Acf{MAS} are systems comprised of multiple autonomous units generally required to collaborate towards a common global goal by pursuing individual objectives. 
The production of simple mobile agents, such as differential-drive robots and multirotors, has become increasingly accessible and cost-effective in recent decades, creating a demand for algorithms that facilitate the deployment of these platforms as collaborative \ac{MAS}, substantially broadening their scope of applications. A thorough analysis of the potential impact of \ac{MAS} on society can be found in \cite{timeline_MAS}. 

The distribution of tasks among several simpler subsystems rather than a single complex agent yields a different design paradigm for \ac{MAS} \cite{Rezaee}.
Their capability to address challenging problems emerges from a collective intelligence, which manifests through the harmonious coordination and collaboration among its units \cite{intro_MAS_book}. In practice, this is accomplished by algorithms that enable MAS to autonomously plan collision-free trajectories and efficiently assign subproblems to individual agents \cite{cena}.

Although not strictly necessary, explicit communication is a central facilitator for the emergence of a collective intelligence in \ac{MAS} \cite{prorok} and has been achieved through a variety of technologies that operate under distinct requirements. For example, omnidirectional  Wi-Fi \cite{robotarium} and Bluetooth \cite{bluetooth} demand distinct conditions for connections to be formed from the directional counterparts, such as infrared and visible light communication \cite{VLC_RCCS}. Furthermore, the maintenance of a local communication network exclusive to the \ac{MAS} offers the benefits of enhanced security, reliability, and dedicated bandwidth over networks shared with other users.  

The allocation and planning requirements of \ac{MAS} can be naturally posed as optimization problems \cite{dmilp_tutorial}. Moreover, the dynamic environments over which mobile agents are typically required to operate also demand the ability to adapt a prior plan to novel and unexpected circumstances, such as the failure of an agent \cite{jcae_resilient} or the emergence of new obstacles and dynamic objectives \cite{moving_target}. 
\Acf{MPC} is an optimization-based control strategy that employs the concept of receding horizon to provide the required robustness. Its suitability for MAS problems has been extensively investigated \cite{survey_schutter} in the context of applications such as autonomous driving \cite{auto_driving}, autonomous coverage \cite{coverage}, target capturing \cite{target_capturing}, and persistent monitoring \cite{persistent_monitoring}. The flexibility of the MPC framework provides diverse formulations for decision-making \cite{kohler_coop,Afonso2020} and collision avoidance \cite{raffo2021,kohler_colavoid}. In the context of the discussed \ac{MAS} problems, a particularly suitable formulation is \ac{MPC} with \acf{MIP} encoding \cite{mip_tutorial}.
MIP is a class of optimization problems characterized by the ability to handle both continuous-valued and, more significantly, integer decision variables. As such, \ac{MIP} offers a powerful framework to model several of the aforementioned complex \ac{MAS} requirements within a unified formulation, providing the benefit of a solution that takes into account all of the \ac{MAS} objectives and constraints simultaneously.
The subclass of \acp{MIP} referred to as convex, which yield convex problems under integrality relaxation and are exemplified by \acf{MILP} and \acf{MIQP}, offers attractive properties. Despite the NP-hardness of general \ac{MIP}, the convex subclass provides a guarantee of convergence to global optimal solutions in a finite number of iterations \cite{mip_tutorial}, while still being capable of modeling notoriously challenging nonconvex problems such as inter-agent collision avoidance, as well as task allocation, and connectivity maintenance \cite{Afonso2020}. Modern solvers \cite{gurobi} can efficiently solve convex \acp{MIP}, enabling integration with the receding horizon strategy of MPC-\ac{MIP} for modest-sized \ac{MAS}. Nevertheless, the scalability of this method w.r.t. number of agents remains as one of the main challenges of the field \cite{prodan_survey}.

 In this context, this work presents a methodology for designing connectivity maintenance constraints for \ac{MILP} models. We leverage the concept of commodity flow \cite{bookflow}, which has been widely used in the field of operations research and network design, but has yet to be fully exploited in the development of trajectory and task allocation frameworks for \ac{MAS}.

\subsection{Related work}

Significant progress has been made on the subject of connectivity maintenance for MAS in the last decades. The problem has been studied in the context of input-to-state stability analysis in \cite{gasparri} and  nonlinear feedback laws for consensus and formation control \cite{egerstedt}. 
Distributed control protocols have been proposed leveraging algebraic connectivity \cite{sabattini} and the ensuing control effort of the maneuvers is decreased by limiting control actions to critical agents, i.e., those whose the removal results in loss of connectivity.
Robustness to model uncertainty has also been addressed with distributed schemes leveraging nonlinear \ac{MPC} \cite{filotheou}. 
The problem of maximizing algebraic connectivity, improving the robustness of a network, has also been investigated using semi-definite programming and distributed optimization in \cite{simonetto}. 
A distributed nonlinear connectivity constraint over the Laplacian matrix of an undirected graph is proposed in \cite{carron} and solved via sequential quadratic programming. 
A consensus-based control method for resilient connectivity maintenance is proposed in \cite{griparic}, with the topology of the network being selected considering a probabilistic model contingent on estimated connectivity levels.
Attention has also been devoted to the issue of strong connectivity maintenance over directed networks, with nonlinear control laws being proposed in \cite{dir_net} for agents under single integrator dynamics.

Sampling methods have also been leveraged to compute trajectories for \ac{MAS} under connectivity requirements. For example, a modified formulation of the rapid-exploring random trees (RRT) method is proposed in \cite{rrt} to guarantee line-of-sight dependent communication between agents, whereas \cite{prm} proposed a scheme based on probabilistic roadmaps (PRM) to dispatch unmanned aerial vehicles to serve as relays in a network connecting a base station and end users.

\begin{figure*}[!ht]
	\centering
	\includegraphics[width=1\textwidth]{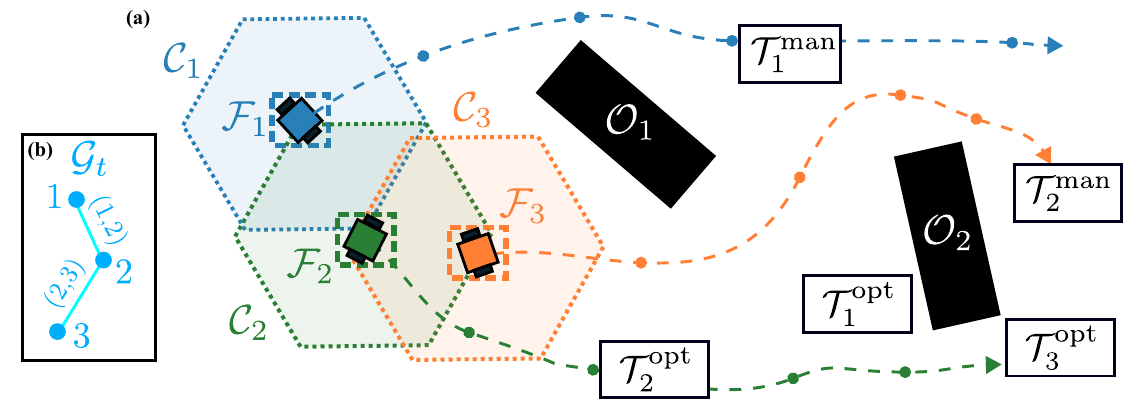}
	\caption{(a) Illustration of motion planning and target allocation problem with three agents, two obstacles, two mandatory, and three optional targets. (b) The undirected graph at time step $t$ representing the communication network under the proximity conditions described by regions $\mathcal{C}_1$, $\mathcal{C}_2$, and $\mathcal{C}_3$. }
	\label{fig:problem}
\end{figure*}

Several MIP-based models have been proposed to address connectivity maintenance of MAS.
In \cite{how_LOS}, the problem of path planning under LOS connectivity requirements is addressed with an MIQP model, where an autonomous helicopter must perform a mission and the remaining aircraft serve as relays connecting it to a ground station. In a similar problem of connecting nodes under LOS requirements, \cite{grotli_milp_conn} proposes a hierarchical architecture where an intermediate \ac{MILP}-based layer plans trajectories to form chain connections. LOS-constrained \ac{MILP} path planning was also studied in the context of pursuer-evader problems in \cite{LOS_evade_purs}. 
\Ac{MINLP} is leveraged in \cite{conn_MINLP} to encode constraints based on realistic stochastic physical layer communication models, ensuring connectivity in a multi-vehicle path planning problem considering predetermined routes. The same technique was employed in \cite{MINLP_signal_degrad} to enforce acoustic connectivity for a group of underwater vessels. 
A \ac{MILP} model to simultaneously address collision-free motion planning, task allocation, and connectivity maintenance of a \ac{MAS} under undirected networks was proposed in \cite{Afonso2020}, with the latter challenge being addressed by \acl{SEC}. Subsequent work introduced formulations for resilient robust connectivity \cite{jcae_resilient}, $k$-hop connectivity \cite{hops}, and connectivity over robot chain control systems considering directional communication technologies \cite{cones}.
The concept of flow has been traditionally used for path planning by roadmap generation over undirected graphs \cite{lasalle_flow}, with recent results being applied to a group of 32 small quadcopters \cite{flow_path}. A network design problem for maximum flow throughput was addressed in \cite{how_flow} using MILP to generate backbone networks considering static and mobile nodes.

\subsection{Contributions}

This work proposes a unified \ac{MILP} model for task allocation and motion planning over a continuous space for networked \ac{MAS} based on \cite{Afonso2020}, where connectivity maintenance is addressed by two commodity flow formulations related to standard and $k$-hop connectivity. 
We establish the necessity and sufficiency of these formulations and contrast their performance and complexity against prior solutions in \cite{Afonso2020} and \cite{hops}. The main contributions are summarized as:
\begin{enumerate}
    \item An MILP flow formulation for motion planning and task allocation under \textit{standard connectivity} which decreases the growth of the number of inequality constraints from exponential \cite{Afonso2020} to quadratic  w.r.t. the number of agents;
    \item An MILP flow formulation for motion planning and task allocation under \textit{$k$-hop connectivity} which decouples the required number of binary optimization variables from the number of hops $k$ and decreases their growth from a fourth-order polynomial \cite{hops} to a quadratic one w.r.t. the number of agents;
    \item A statistical evaluation of the complexity trade-offs between the proposed flow models and their corresponding baselines in \cite{Afonso2020,hops}, assessing their scalability for increasingly larger networked MAS and suitability for receding-horizon planning.
\end{enumerate}

\subsection{Notation}

Define the set of integers in the interval $[a,b]$ as $\mathcal{I}_a^b \triangleq \{a,a+1,\dots, b-1,b\}$ and $\mathbf{1}_n \triangleq [1,1,\dots,1]^\top \in \mathbb{R}^n$. The sets of natural numbers with and without zero are respectively written as $\mathbb{N}$ and $\mathbb{N}^*$. The symbol $\oplus$ denotes the Minkowski sum operation. 
A binary implication $b \implies '\circ'$, $b \in \{0,1\}$, is equivalent to $b = 1 \implies '\circ'$.

\section{Problem statement}\label{sec2}

We consider the problem of planning trajectories and allocating tasks for a group of $n_a \in \{ \mathbb{N}\ \vert\ n_a \geq 2\}$, agents operating in an environment with $n_d \in \{2,3\}$ spatial dimensions. The following assumptions are made.

\begin{assumption}
    The agents operate in regimes where highly aggressive or evasive maneuvers are not required. Under these operating conditions, the non-linear kinematics of the mobile platforms can be adequately approximated or linearized by low-level controllers under linear discrete-time state-space models.
\end{assumption}

\begin{assumption}
    A global localization system, e.g., a global positioning system for outdoor environments or a motion capture system for indoor setups, is available, such that all agents share a common inertial frame of reference, allowing relative distances to be computed centrally.
\end{assumption}

The dynamics and kinematics are represented by linear discrete-time state space models
\begin{align*}
	& \mathbf{x}_{i,t+1} = \mathbf{A}_i \mathbf{x}_{i,t} + \mathbf{B}_i \mathbf{u}_{i,t},\ \mathbf{y}_{i,t} = \mathbf{C}_i \mathbf{x}_{i,t},
\end{align*}
where $\mathbf{x}_{i,t} \in \mathbb{R}^{n_{x,i}}$, $\mathbf{u}_{i,t} \in \mathbb{R}^{n_{u,i}}$ are the state and input vectors of agent $i$ at time step $t$, respectively. The output vector $\mathbf{y}_{i,t} \in \mathbb{R}^{n_d}$ represents the position of the agent w.r.t. an inertial frame of reference. State, input, and output matrices are denoted by $\mathbf{A}_i \in \mathbb{R}^{n_{x,i} \times n_{x,i}}$, $\mathbf{B}_i \in \mathbb{R}^{n_{x,i} \times n_{u,i}}$, and $\mathbf{C}_i \in \mathbb{R}^{n_{d} \times n_{x,i}}$, respectively. The physical limitations of each agent are enforced by bounds on their states $\mathbf{x}_{i,t} \in \mathcal{X}_i$ and inputs $\mathbf{u}_{i,t} \in \mathcal{U}_i$, where $\mathcal{X}_i \subset \mathbb{R}^{n_{x,i}}$ and $\mathcal{U}_i \subset \mathbb{R}^{n_{u,i}}$ are polytopes representing the state and input sets of agent $i$, respectively.

The objective of the group is to reach $n_t \in \mathbb{N}^*$ targets (see Figure \ref{fig:problem}) represented by the polytopes $\mathcal{T}_v \subset \mathbb{R}^{n_d}$, $\forall v \in \mathcal{I}_1^{n_t}$.  
We consider both the cases where targets are assigned \textit{a priori} and their allocation is performed by the MAS. 
In the latter case, the targets are divided into mandatory $\mathcal{T}^\text{man} \subseteq \mathcal{I}_1^{n_t}$ or optional $\mathcal{T}^\text{opt}\subset \mathcal{I}_1^{n_t}$, $\vert \mathcal{T}^\text{man}\vert + \vert \mathcal{T}^\text{opt} \vert = n_t$, $\vert \mathcal{T}^\text{man}\vert \geq 1$. The group must determine the assignments based on a compromise between collecting optional target rewards $r_v \in \mathbb{R}^+,\ v \in \mathcal{T}^\text{opt}$, and performing an economic maneuver in terms of fuel expense and completion time. The mission is complete when all mandatory targets are visited. 
The environment contains $n_o \in \mathbb{N}$ obstacles represented by the polytopes $\mathcal{O}_c \subset \mathbb{R}^{n_d}$, $\forall c \in \mathcal{I}_1^{n_o}$. Inter-agent collisions must also be avoided considering the polytopic approximation of the footprint of each agent, $\mathcal{F}_i \subset \mathbb{R}^{n_d},\ \forall i \in \mathcal{I}_1^{n_a}$.

The group is required to maneuver while preserving the connectivity of a local communication network represented by the time-varying digraph $\mathcal{G}_t = (\mathcal{V},\mathcal{A}_t)$, where $\mathcal{V} = \mathcal{I}_1^{n_a}$ is the set of $n_a$ vertices, each representing an agent, and $\mathcal{A}_t \subseteq \{(i,j) \in\mathcal{V} \times \mathcal{V}\ \vert \ i\neq j \} $ is a time-varying set of arcs corresponding to directed connections formed between them.
The existence of an arc is conditioned on agent $j$ being within transmission range of agent $i$, e.g., $(i,j) \in \mathcal{A}_t \implies \Delta \mathbf{y}_{j,i,t} \in \mathcal{C}_{i}$, where $\Delta \mathbf{y}_{j,i,t} \triangleq \mathbf{y}_{j,t} - \mathbf{y}_{i,t}$ and $\mathcal{C}_{i} = \{\boldsymbol{\sigma} \in \mathbb{R}^{n_d}\ \vert \ \mathbf{P}^\text{c}_{i} \boldsymbol{\sigma} \leq \mathbf{q}^\text{c}_{i} \}$ represents a polytope approximating the connectivity region of agent $i$, where $\mathbf{P}^\text{c}_{i} \in \mathbb{R}^{n_{s,i} \times n_d}$, $\mathbf{q}^\text{c}_{i} \in \mathbb{R}^{n_{s,i}}$, with $n_{s,i} \in \{\mathbb{N}\ \vert\ n_{s,i} \geq 3\}$ denoting the number of hyperplanes defining the corresponding connectivity region. 
\begin{remark}
    In this work, only undirected graphs (referred to simply as a \textit{graph}) are considered. An undirected graph is a special case of a digraph where connections are bidirectional, i.e., $(i,j) \in \mathcal{A}_t \iff (j,i) \in \mathcal{A}_t$.
\end{remark}

Let  $\mathbf{b}^\text{hor} = \left [b^\text{hor}_0,\dots,b^\text{hor}_{\bar{N}} \right]^\top$ denote the vector of time-indexed binary variables $b_t^\text{hor} \in \{0,1\}$ related to a variable planning horizon $N = [0,1,\dots, \bar{N}]\mathbf{b}^\text{hor}$, with $\bar{N} \in \mathbb{N}^*$ being the maximum planning horizon allowed. We develop our study in the context of the following \acf{MILP} model originally proposed in \cite{Afonso2020}.
\begin{subequations}
	\begin{flalign}
		&\text{\textit{Networked \ac{MAS} task allocation and motion planning \ac{MILP} model}} \nonumber \\
		&\underset{ \{\mathbf{u}_{i,t},b_{i,t,v}^\text{tar},b^\text{con}_{i,j,t},b^\text{hor}_t\}  }
		{\textrm{min}}\ N + \sum_{i=1}^{n_a}\sum_{t=0}^{N-1} \gamma_i \Vert \mathbf{u}_{i,t} \Vert_1 - \sum_{t=0}^N \sum_{i=1}^{n_a} \sum_{v \in \mathcal{T}^\text{opt}} r_v b^\text{tar}_{i,t,v} \ \label{cost} \\
		&\textrm{subject to:} \nonumber \\
		& \text{\textbf{Dynamics, input and state bounds}} \nonumber \\
        &\mathbf{x}_{i,0} = \mathbf{x}_{i,\text{init}},\ \forall i \in \mathcal{I}_1^{n_a},\label{const:ini_cond}\\
		&  \mathbf{x}_{i,t+1} = \mathbf{A}_i \mathbf{x}_{i,t} + \mathbf{B}_i \mathbf{u}_{i,t},\ \forall t \in \mathcal{I}_0^{N-1}, \label{const:dyn1} \\
		& \mathbf{y}_{i,t} = \mathbf{C}_i \mathbf{x}_{i,t}, \forall t \in \mathcal{I}_0^{N}, \label{const:dyn2}  \\
		& \mathbf{x}_{i,t} \in \mathcal{X}_i,\ \forall i \in \mathcal{I}_1^{n_a},\forall t \in \mathcal{I}_0^{N}, \label{const:x_bound} \\
		& \mathbf{u}_{i,t} \in \mathcal{U}_i,\ \forall i \in \mathcal{I}_1^{n_a},\forall t \in \mathcal{I}_0^{N-1}, \label{const:u_bound} \\
		&\text{\textbf{Obstacle and inter-agent collision avoidance}} \nonumber \\
		& \mathbf{y}_{i,t} \not \in \mathcal{O}_c,\ \forall c \in \mathcal{I}_1^{n_o},\ \forall t \in \mathcal{I}_0^N, \label{const:obs_avoid} \\
		& \Delta \mathbf{y}_{i,j,t} \not \in \mathcal{F}_{i} \oplus \mathcal{F}_{j},\ \forall i \in \mathcal{I}_1^{n_a-1},\ \forall j \in \mathcal{I}_{i+1}^{n_a}\ \forall t \in \mathcal{I}_0^N, \label{const:col_avoid} \\
		&\text{\textbf{Target allocation and mission completion}} \nonumber \\
		& b^\text{tar}_{i,t,v} \implies \mathbf{y}_{i,t} \in \mathcal{T}_{v},\ \forall v \in \mathcal{I}_1^{n_t},\forall t \in \mathcal{I}_0^{N}, \label{const:tar_entry}  \\
		& \sum_{t=0}^{N} \sum_{i=1}^{n_a} b^\text{tar}_{i,t,v} \leq 1,\ \forall v \in \mathcal{I}_1^{n_t}, \label{const:limit_collect} \\
		& b^\text{hor}_t \leq  \sum_{i=1}^{n_a}\sum_{n=0}^{t} b^\text{tar}_{i,n,v},\  \forall v \in \mathcal{T}^\text{man}, \forall t \in \mathcal{I}_0^{N}, \label{const:terminal} \\
		& \mathbf{1}^\top_{\bar{N}} \mathbf{b}^\text{hor} = 1, \label{const:hor} \\
		&\text{\textbf{Connectivity}} \nonumber \\
		& \mathbf{P}^\text{c}_{i} \Delta \mathbf{y}_{j,i,t} \leq \mathbf{q}_{i}^\text{c} + \mathbf{1}_{n_{s,i}} \left( 1-b^\text{con}_{j,i,t} \right )M,\ (i,j) \in\mathcal{V} \times \mathcal{V},\ i\neq j, \forall t \in \mathcal{I}_0^{N} \label{const:prox} \\
		& \mathbf{H} \mathbf{b}^\text{con} \leq \mathbf{c} \label{const:conn}
	\end{flalign}
\end{subequations}

Cost (\ref{cost}) minimizes a trade-off between time, represented by the optimal horizon $N$; control effort, represented by the accumulated absolute value of the inputs of all agents over $N-1$ time steps and weighted by $\gamma_i \in \mathbb{R}^+,\ \forall i \in \mathcal{I}_1^{n_a}$; and reward collection. 
Constraint (\ref{const:ini_cond}) initializes the prediction model by enforcing that the first state in the planning horizon $\mathbf{x}_{i,0}$ matches the current state of agent $i$, $\mathbf{x}_{i,\text{init}} \in \mathbb{R}^{n_{x,i}}$.
Constraints (\ref{const:dyn1}-\ref{const:dyn2}) enforce the dynamics whereas \eqref{const:x_bound}, (\ref{const:u_bound}) bound state and input, respectively. Obstacle and inter-agent avoidance are enforced by (\ref{const:obs_avoid}) and (\ref{const:col_avoid}), respectively. 
Given the target binary variable $b^\text{tar}_{i,t,v} \in \{0,1\}$, constraint \eqref{const:tar_entry} encodes the conditions for target visitation, while \eqref{const:limit_collect} prevents agents from collecting rewards from the same target multiple times. Mission completion is enforced by (\ref{const:terminal}) which conditions the activation of the horizon binary on the visitation of all mandatory targets, whereas (\ref{const:hor}) guarantees that the horizon binary is activated strictly once in at most $\bar{N}$ time steps.  

We note that the compact notation employs a cost and constraints (\ref{const:dyn1}-\ref{const:hor})  enforced for $0 \leq t \leq N$, i.e., up to the optimal horizon. Since this quantity cannot be known \textit{a priori}, in practice, they are written explicitly for $0 \leq t \leq \bar{N}$ and relaxed for $t \geq N$. 
Since this work is concerned solely with connectivity, we present only the relevant constraints in their explicit form for the sake of brevity. The reader is referred to \cite{Afonso2020,varHor} for a detailed description of (\ref{const:dyn1}-\ref{const:hor}).

Constraint (\ref{const:prox}) encodes the proximity conditions for the existence of an arc $(i,j)$, i.e., $b^\text{con}_{j,i,t} \implies \Delta \mathbf{y}_{j,i,t} \in \mathcal{C}_i$, where $b^\text{con}_{j,i,t} \in \{0,1\}$ is a connectivity binary variable. This is accomplished using the ``Big-M" variable $M > 0$, which relaxes (\ref{const:prox}) if the corresponding binary is equal to zero. Finally, \eqref{const:conn} encodes conditions on the vector of all connectivity binary variables $\mathbf{b}^\text{con} \in \{0,1\}^{n_c}$, $n_c = n_a(n_a-1)(\bar{N}+1)$, $\mathbf{H} \in \mathbb{R}^{n_c\times n_c}$, $\mathbf{c} \in \mathbb{R}^{n_c}$, such that a connected communication network can be constructed. 

\section{MIP connectivity formulations}

\subsection{Standard connectivity}\label{sec:subtour}
A graph $\mathcal{G}_t = (\mathcal{V},\mathcal{A}_t)$ is connected if it contains at least one path between any pair of vertices belonging to $\mathcal{V}$. In this work, this property is referred to as standard connectivity.
In \cite{Afonso2020}, Subtour Elimination Constraints (SEC) are employed to handle the problem of standard connectivity maintenance over MAS networks represented by undirected graphs. 
The solution follows the rationale that the task of the motion planner is to \textit{enable} the construction of a connected network by properly positioning the agents while they spread to reach the targets. The \acp{SEC} satisfy this requirement by guaranteeing that the group's displacement allows the construction of a spanning tree at each time step, which is a necessary and sufficient condition for standard connectivity. 

In the following, we present a \ac{MILP} formulation for standard connectivity based on flows that can be readily integrated into the MAS motion planning and task allocation MILP model as an alternative  to SEC.

\begin{definition}
A time-varying flow network is defined on a digraph $\mathcal{G}_t = (\mathcal{V},\mathcal{A}_t)$, $\forall t \in \mathcal{I}_0^{\bar{N}}$, associated with flow variables $f_{i,j,t}^{(\ell)} \geq 0$, $\forall (i,j) \in \mathcal{A}_t$ where a virtual commodity $\ell$ starting at a source  $s \in \mathcal{V}$ is required to reach a set of sinks $\mathcal{P} \subseteq \mathcal{V}\setminus \{s\}$ while satisfying the following conditions:
\begin{align}
    & \sum_{j \in \mathcal{V}\setminus \{i\}} f_{i,j,t}^{(\ell)} - \sum_{u \in \mathcal{V}\setminus \{i\}} f_{u,i,t}^{(\ell)} = \begin{cases}
    w_\ell, \text{ if } i = s,\\
    -w_{\ell,i}, \text{ if } i \in \mathcal{P},\\
    0, \text{ otherwise},
    \end{cases} \text{ (balance),}\label{def:flow_bal}\\
     & f_{i,j,t}^{(\ell)} \leq c_{i,j,t} \text{ (capacity),} \label{def:flow_cap}
\end{align}
\end{definition}

where $c_{i,j,t} \geq 0$ is the capacity associated with an arc $(i,j) \in \mathcal{A}_t$, $w_\ell = \sum_{i \in \mathcal{P}} w_{\ell,i}$ is the total commodity $\ell$ supplied by the source and $w_{\ell,i} > 0$, $i \in \mathcal{P}$, are the assigned demand for each sink $i \in \mathcal{P}$. 

The flow variables associated with a single commodity $f_{i,j,t} \geq 0,\  (i,j) \in  \mathcal{V} \times \mathcal{V},\ i \neq j$, which are continuous-valued decision variables declared for all potential existing arcs, as $\mathcal{A}_t$ is unknown \textit{a priori}.
 Without loss of generality, we select node $s \in \mathcal{V}$ to be the source, and the remaining nodes to be sinks, $\mathcal{P} =  \mathcal{V}\setminus \{s\}$. In this context, the following constraints (\ref{const:cap}-\ref{const:sink}) encode necessary and sufficient conditions for standard connectivity, $\forall t \in \mathcal{I}_0^{N}$ as proven in theorem~\ref{thm:connectivity}:
\begin{subequations}\label{global}
\begin{align}
    \begin{cases}
        f_{i,j,t} \leq (n_a - 1)b^\text{con}_{i,j,t}\\
        f_{j,i,t} \leq (n_a - 1)b^\text{con}_{i,j,t} 
    \end{cases}, &\  \forall (i,j) \in \mathcal{V}\times\mathcal{V} , \: i<j, \label{const:cap}
\end{align}
\begin{empheq}[
  left={\sum_{j \in \mathcal{V}\setminus\{i\}} (f_{i,j,t} - f_{j,i,t}) =\empheqlbrace}
]{alignat=2}
    &n_a-1           &\qquad& \text{if $i = s$,}    \label{const:source} \\
    &-1 &\qquad& \text{if } i \in \mathcal{V}\setminus\{s\}. \label{const:sink}
\end{empheq}
\end{subequations}

\begin{theorem}
\label{thm:connectivity}
	The graph $\mathcal{G}_t$ is connected if and only if there exists a flow satisfying the constraints (\ref{const:cap}-\ref{const:sink}).
\end{theorem}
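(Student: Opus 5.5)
The plan is to prove the two directions separately. Throughout, the graph $\mathcal{G}_t$ is identified with the undirected arc set encoded by the binaries: $\{i,j\}$ is an edge iff $b^\text{con}_{i,j,t}=1$, with $i<j$. This matches constraint (\ref{const:cap}), where a single binary gates flow in both directions.

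\textbf{Sufficiency (flow $\Rightarrow$ connected).} I would argue by contradiction with a cut argument. Suppose a feasible flow exists but $\mathcal{G}_t$ is disconnected. Let $S\subset\mathcal{V}$ be the vertex set of the connected component containing the source $s$; then $S\neq\mathcal{V}$.

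Sum the balance constraints (\ref{const:source})--(\ref{const:sink}) over all $i\in\mathcal{V}\setminus S$. Every flow variable with both endpoints outside $S$ appears once with each sign and cancels. What remains is
\[
\sum_{i\notin S,\, j\in S} \bigl(f_{i,j,t}-f_{j,i,t}\bigr) = -\,\lvert \mathcal{V}\setminus S\rvert < 0 .
\]
No edge joins $S$ and $\mathcal{V}\setminus S$, so every binary on such a pair is $0$. Constraint (\ref{const:cap}) then forces both $f_{i,j,t}=0$ and $f_{j,i,t}=0$ on these pairs, which makes the left-hand side zero. This is a contradiction, so $\mathcal{G}_t$ is connected.

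\textbf{Necessity (connected $\Rightarrow$ flow).} I would construct the flow explicitly. Since $\mathcal{G}_t$ is connected, it has a spanning tree $T$, which I root at $s$. For each non-source vertex $v$, let $D(v)$ be the number of vertices in the subtree of $T$ rooted at $v$, including $v$ itself. For each tree edge with parent $p$ and child $v$, set $f_{p,v,t}=D(v)$. All other flow variables are set to $0$.

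The checks are then:
\begin{itemize}
\item \emph{Balance at a non-source vertex $v$.} The inflow is $D(v)$. The outflow is $\sum_{c\text{ child of }v} D(c)=D(v)-1$. The net outflow is therefore $-1$, as required by (\ref{const:sink}).
\item \emph{Balance at the source.} The outflow is $\sum_{c\text{ child of }s} D(c)=n_a-1$ and there is no inflow, which matches (\ref{const:source}).
\item \emph{Capacity.} Each $D(v)\le n_a-1$, and every tree edge has $b^\text{con}=1$. On all remaining pairs the flow is zero, so (\ref{const:cap}) holds trivially.
\item \emph{Nonnegativity.} All assigned values are nonnegative.
\end{itemize}

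\textbf{Main obstacle.} The mathematics is standard; the delicate part is bookkeeping. First, the statement must be interpreted so that "$\mathcal{G}_t$" means the graph induced by the $b^\text{con}$ binaries. Second, the cut summation must correctly cancel the internal flows; I would write it out carefully using a double sum split over $j\in S$ and $j\notin S$. A further remark worth including is that in an optimal solution the proximity constraint (\ref{const:prox}) only permits $b^\text{con}=1$ for geometrically feasible arcs. This is what links the flow certificate to the physical communication network.
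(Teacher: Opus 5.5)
Your proof is correct and follows the paper's route. For necessity, you build a spanning tree rooted at $s$ and put on each tree edge the size of the child's subtree; your $D(v)$ is the paper's $n(v)+1$ from its DFS tree. For sufficiency, you sum the balance equations over one side of the cut around the source's component to reach a contradiction. The only difference is cosmetic: you sum over the complement $\mathcal{V}\setminus S$, which contains only sinks and gives $-\vert \mathcal{V}\setminus S\vert<0$, whereas the paper sums over the source side.
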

\begin{proof}
    \textit{Necessity}. Since $\mathcal{G}_t$ is connected, we can use a \ac{DFS} to extract a spanning tree subgraph and record for every node $u$ the number $n(u)$ of visited nodes in the subtrees. Define $\mathcal{E}$ as the set of edges traversed by the DFS algorithm. Then, the following flow can be constructed:
    \begin{align*}
        f_{u,v,t} &=
        \begin{cases}
            n(v)+1 & \text{if } (u,v) \in \mathcal{E} \text{ and } n(u)>n(v), \\
            0 &\text{else}.
        \end{cases}
    \end{align*}
    Since this flow traverses the spanning tree structure, which is a subgraph of $\mathcal{G}_t$ induced by $b^\text{con}_{i,j,t}$ through \eqref{const:cap}, reaching all sinks from the source, it must also satisfy the balance constraints \eqref{const:source} and \eqref{const:sink}.\\
    \textit{Sufficiency}. We prove it by contradiction and assume that $\mathcal{G}_t$ is disconnected. Let $\mathcal{R} \subseteq \mathcal{V}$ be the set of vertices reachable from the source satisfying: $\vert \mathcal{R} \vert \leq n_a-2$. Summing \eqref{const:source} and \eqref{const:sink} for $i\in \mathcal{R}$, we get :
    \begin{align}
        \sum_{i\in \mathcal{R}} \sum_{j\in \mathcal{V}\setminus \{i\}} f_{i,j,t}-f_{j,i,t} = n_a-1 -\vert \mathcal{R} \vert \geq 1
        \label{eq:reachable-v}
    \end{align}
    In \eqref{eq:reachable-v}, for any $i \in \mathcal{R}$, if $j\in \mathcal{R}$ then the term $f_{i,j,t}$ appears twice, once positively and once negatively, and thus cancels. We thus get that:
    \begin{align*}
        \sum_{i\in \mathcal{R}} \sum_{j\in \mathcal{V}\setminus \mathcal{R}} f_{i,j,t}-f_{j,i,t} \geq 1.
    \end{align*}
    This means that there exists $i\in \mathcal{R}$ and $j\in \mathcal{V}\setminus \mathcal{R}$ such that $f_{i,j,t}>0$. By assumption, no other vertices are reachable from $\mathcal{R}$, meaning that $b_{i,j,t}^{\text{con}} = 0$, which contradicts \eqref{const:cap}.
\end{proof}

As shown in Table \ref{tab:complex_standard}, the number of binary variables introduced by the SEC and flow formulations is the same. The main distinction comes from the substantial decrease in constraints achieved by the flow formulation, which scales quadratically whereas the SEC scales exponentially. This represents a significant decrease in the size of the MILP optimization model. However, the flow formulation requires the introduction of additional continuous variables that grow quadratically with the number of agents.

\begin{table}[ht]
    \centering
    \caption{Complexity of standard connectivity formulations.}
    \label{tab:complex_standard}
    \renewcommand{\arraystretch}{2.0}
    \setlength{\tabcolsep}{4pt}
    \begin{tabular}{lcccc}
        \toprule
        \textbf{Method}
            & \makecell{\textbf{Binary}\\ \textbf{variables}}
            & \makecell{\textbf{Continuous}\\ \textbf{variables}}
            & \textbf{Constraints} 
            & \makecell{\textbf{Necessity \&}\\ \textbf{sufficiency}}\\
        \midrule
        SEC~\cite{Afonso2020}
            & $O\!\left(n_a^2\right)$
            & $0$
            & $O\!\left(2^{n_a}\right)$ 
            & Yes \\
        Flow 
            & $O\!\left(n_a^2\right)$
            & $O\!\left(n_a^2\right)$
            & $\bm{O\!\left(n_a^2\right)}$ 
            & Yes \\
        \bottomrule
    \end{tabular}
\end{table}

\subsection{\textit{k}-hop connectivity}\label{sec:hops_afonso}

We start the discussion regarding $k$-hop connectivity by listing a few fundamental definitions.

\begin{definition}
    A path in graph $\mathcal{G}_t$ is a sequence of vertices $\{v_c\}_{c=1}^n$ where $(v_c,v_{c+1}) \in \mathcal{A}_t$ and $v_i \neq v_j$ for all $i, j \in \{1, \dots, n\}, i \neq j$.
\end{definition}

\begin{definition}
     The \textit{length} of a path is its number of edges $L\left( \{v_c\}_{c=1}^n \right)=n-1$. The number of hops in a path is equal to its length. The \textit{distance} between a pair of vertices is the length of the shortest path between them. The \textit{diameter} of a connected graph, $\text{diam}(\mathcal{G}_t)$, is the maximum distance between any two vertices in $\mathcal{V}$.
\end{definition}

\begin{definition}
     A connected graph is $k$-hop connected if and only if $\text{diam}(\mathcal{G}_t) \le k$.
\end{definition}
The $k$-hop connectivity is a desired property of networks due to the limits it imposes on the distance a message has to traverse to reach its destination, thereby reducing energy consumption  and enhancing communication quality \cite{min_energy_networks}.
Necessary and sufficient conditions for this property have been proposed as linear constraints of MILP models in \cite{hops}. The formulation is constructed using conditions over the power of binary variable matrices, which yield nonlinear constraints that must be linearized through the introduction of auxiliary binary variables and constraints, substantially increasing the theoretical complexity of the optimization model.

Consider the time-indexed graph $\mathcal{G}_t = (\mathcal{V},\mathcal{A}_t)$. Leveraging the concept of layered graphs \cite{khop_flow}, we define the flow variables with an extra dimension representing a layer $h$, $f_{i,j,t,h}^{(\ell)}\geq 0$, $\forall (i,j) \in \mathcal{V}\times\mathcal{V}$, $\forall \ell = (s,p) \in \mathcal{K}$, $\forall t \in \mathcal{I}_0^{\bar{N}}$, $\forall h \in \mathcal{I}_1^k$, where $k \in \mathbb{N},\ 1 \leq k \leq n_a-1$, denotes the \textit{desired} diameter (maximum number of hops) of the network. It follows that each arc in the network is associated with a distinct flow variable for each layer of $\mathcal{G}_t$.
The following multi-commodity flow constraints, based on \cite{khop_flow}, are proposed to enforce $k$-hop connectivity , $\forall t \in \mathcal{I}_0^{\bar{N}}$, $\forall \ell = (s,p)\in \mathcal{K}$,
\begin{align}
    &\begin{cases}
		f_{i,j,t,h}^{(\ell)} \leq b^\text{con}_{i,j,t}\\
		f_{j,i,t,h}^{(\ell)} \leq b^\text{con}_{i,j,t} 
	\end{cases},\  \forall (i,j) \in \mathcal{V}\times\mathcal{V} , \: i<j,\ \forall h \in \mathcal{I}_1^k, \label{const:flow_khop}\\
    & \sum_{j \in \mathcal{V}\setminus \{s\} } f_{s,j,t,1}^{(\ell)} = 1, \label{const:source_khop} \\
    & \sum_{h=2}^{k}\  \sum_{i \in \mathcal{V}\setminus \{p\} } f_{i,p,t,h}^{(\ell)} + f_{s,p,t,1}^{(\ell)} = 1 \label{const:sink_khop}, \\
    &f_{s,i,t,1}^{(\ell)} - \sum_{j \in \mathcal{V}\setminus\{i,s\}} f _{i,j,t,2}^{(\ell)} = 0,\ \forall i \in \mathcal{V}\setminus\{s,p\}, \label{const:relay_khop1} \\
    & \sum_{j \in \mathcal{V}\setminus\{i,p\}} f_{j,i,t,h}^{(\ell)} - \sum_{j \in \mathcal{V}\setminus\{i,s\}} f_{i,j,t,h+1}^{(\ell)} = 0,\ \forall i \in \mathcal{V}\setminus\{s,p\}, \forall h \in \mathcal{I}_2^{k-1} \label{const:relay_khop2} 
\end{align}

 Theorem~\ref{thm:k-hop-connectivity} shows that the flow constraints (\ref{const:flow_khop}–\ref{const:relay_khop2}) constitute necessary and sufficient conditions for k-hop connectivity.

\begin{theorem}
\label{thm:k-hop-connectivity}
	The graph $\mathcal{G}_t$ is $k$-hop connected if and only if there exists a flow satisfying the constraints (\ref{const:flow_khop}-\ref{const:relay_khop2}) for all $(s,p)$.
\end{theorem}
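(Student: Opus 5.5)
The proof splits into necessity and sufficiency, as in Theorem~\ref{thm:connectivity}, and works pair by pair. Throughout I read $\mathcal{K}$ as the set of ordered pairs $(s,p)$ with $s\neq p$. I interpret the layered flow for a fixed commodity $\ell=(s,p)$ as a unit of flow that leaves $s$ in layer $1$. Each relay constraint \eqref{const:relay_khop1}--\eqref{const:relay_khop2} forces flow entering an intermediate node $i\notin\{s,p\}$ in layer $h$ to leave it in layer $h+1$. The flow can be absorbed only at $p$, in any layer $h\le k$, as counted by \eqref{const:sink_khop}. Layer $h$ therefore carries exactly the $h$-th arc of an $s$--$p$ walk, and layers stop at $k$.

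\emph{Necessity.} Assume $\text{diam}(\mathcal{G}_t)\le k$ and fix $(s,p)$. Take a shortest path $s=v_1,v_2,\dots,v_{m+1}=p$ with $m\le k$. Set $f_{v_h,v_{h+1},t,h}^{(\ell)}=1$ for $h\in\mathcal{I}_1^m$ and all other flow variables of commodity $\ell$ to zero.
\begin{itemize}
\item Capacity \eqref{const:flow_khop} holds because every used arc lies in $\mathcal{A}_t$, so the corresponding $b^\text{con}_{i,j,t}=1$ under the undirected, symmetric encoding.
\item For \eqref{const:source_khop}, exactly one layer-$1$ arc leaves $s$.
\item For \eqref{const:sink_khop}, exactly one arc enters $p$, either as $f_{s,p,t,1}^{(\ell)}$ if $m=1$ or in layer $m\ge2$.
\item For \eqref{const:relay_khop1}--\eqref{const:relay_khop2}, each interior vertex $v_h$ receives one unit in layer $h-1$ and emits one unit in layer $h$. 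Every other vertex has zero in- and out-flow.
\end{itemize}
The index restrictions ($j\neq s$ on outflow, $j\neq p$ on inflow) are respected because a path does not revisit $s$ and, being shortest, uses $p$ only at its end. Hence a feasible flow exists for every $(s,p)$.

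\emph{Sufficiency.} Fix $(s,p)$ and a feasible flow; I aim to extract an $s$--$p$ walk of at most $k$ arcs using only arcs with $b^\text{con}=1$. A walk contains a path no longer than itself, so this gives $\text{dist}(s,p)\le k$. Since this holds for all pairs, $\mathcal{G}_t$ is connected with diameter at most $k$.
\begin{itemize}
\item By \eqref{const:source_khop}, some $f_{s,i_1,t,1}^{(\ell)}>0$, and the capacity constraint then gives $b^\text{con}_{s,i_1,t}=1$.
\item If $i_1=p$, stop. Otherwise \eqref{const:relay_khop1} gives positive layer-$2$ outflow from $i_1$ to some $i_2\neq s$.
\item Inductively, whenever positive flow enters $i_h\neq p$ in layer $h\le k-1$, \eqref{const:relay_khop2} yields positive layer-$(h+1)$ flow to some $i_{h+1}$ along an arc with $b^\text{con}=1$.
\end{itemize}

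The main obstacle is ruling out positive flow entering a non-sink node $i_k\neq p$ in the last layer $k$. No constraint forces that flow to continue, so the trace could stall there. I would handle this by a flow-accounting argument rather than greedy tracing. Layer-$1$ total flow equals $1$ by \eqref{const:source_khop}. The relay constraints show that, at each layer, the flow not absorbed at $p$ is passed on in full to the next layer. Constraint \eqref{const:sink_khop} states that the total absorbed at $p$ over layers $1,\dots,k$ equals $1$, so all flow reaches $p$ within $k$ layers. The flow is acyclic in the layered (time-expanded) graph, so path decomposition applies. It writes the flow as a sum of $s$--$p$ layered paths plus possibly paths terminating at a non-sink in layer $k$, and the accounting forces that second part to have total weight zero. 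The decomposition must therefore contain at least one genuine $s$--$p$ layered path of length at most $k$. The balance bookkeeping needs care when $p$ also carries outflow or when a walk revisits $s$. If direct accounting proves messy, I would fall back on decomposing the flow explicitly into layered paths and reading off a positive-weight path that terminates at $p$.
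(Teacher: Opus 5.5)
Your necessity argument is correct and is the same as the paper's. The gap is in sufficiency. You were right that a forward trace can stall in layer $k$, but the accounting claim you use to rule this out is false for these constraints. The inflow sum in \eqref{const:relay_khop2} ranges over $j\in\mathcal{V}\setminus\{i,p\}$, so it includes arcs leaving $s$ in layers $h\ge 2$, and \eqref{const:sink_khop} counts $f^{(\ell)}_{s,p,t,h}$ for $h\ge 2$. Nothing ties these to the layer-$1$ unit, so $s$ can inject fresh flow after layer $1$, and total absorption $1$ at $p$ does not mean the layer-$1$ unit arrives.

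For instance, take $k=2$, vertices $s,a,b,p$ and edges $\{s,a\},\{a,b\},\{s,p\}$. Setting $f^{(\ell)}_{s,a,t,1}=f^{(\ell)}_{a,b,t,2}=f^{(\ell)}_{s,p,t,2}=1$ and all other flows to zero satisfies \eqref{const:flow_khop}--\eqref{const:relay_khop2}. Yet the unit leaving $s$ in layer $1$ stops at $b\neq p$ in layer $k$, and no flow reachable from $s$ in layer $1$ ever enters $p$. So your stalled part has weight $1$, not $0$. No argument that only follows flow launched from $s$ in layer $1$, greedy or via decomposition, can succeed. Your fallback decomposition could only be repaired by letting paths start at $s$ in every layer, and by discarding the arcs that no constraint touches (into $s$, out of $p$, from $p$ into relays, and layer-$1$ arcs not leaving $s$). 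That bookkeeping is exactly what is missing.

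The idea you need, and what the paper does, is to trace backward from $p$ rather than forward from $s$. By \eqref{const:sink_khop} there is positive flow into $p$: either $f^{(\ell)}_{s,p,t,1}>0$, or $f^{(\ell)}_{j,p,t,h}>0$ for some $j\neq p$ and $2\le h\le k$. If the tail is $s$ you are done. Otherwise the tail is a relay with positive counted outflow in layer $h$. Reading \eqref{const:relay_khop2} (for $h\ge 3$) or \eqref{const:relay_khop1} (for $h=2$) from right to left then gives positive inflow into that relay in layer $h-1$, from a vertex other than $p$; in layer $1$ that vertex must be $s$. Repeat until you reach $s$. Every arc used has $b^\text{con}=1$ by \eqref{const:flow_khop}, and the trace has at most $h\le k$ arcs, so you obtain an $s$--$p$ walk, hence a path, of length at most $k$. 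Stalling never arises because you never need flow to continue forward; in the example above the backward trace simply finds the arc $(s,p)$.
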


\begin{proof}
    \textit{Necessity}. If $\mathcal{G}_t$ is $k$-hop connected, then for any two vertices $(s,p)\in \mathcal{V}$ there is a path : $s = v_0, \ldots , v_h = p$ with $h\leq k$. Let $f$ be the flow routing one unit along that path:
    \begin{align*}
        f_{v_0,v_1,t,1}^{(\ell)} = 1, f_{v_1,v_2,t,2}^{(\ell)}=1, \ldots, f_{v_{h-1},v_h,t,h}^{(\ell)} = 1,  
    \end{align*}
    and all the other flow variables are set to $0$.
    \begin{itemize}
        \item \eqref{const:flow_khop}: each active flow variable corresponds to an edge. 
        \item \eqref{const:source_khop} is satisfied since:
        \begin{align*}
            \sum_{j \in \mathcal{V}\setminus \{s\} } f_{s,j,t,1}^{(\ell)} = f_{s,v_1,t,1} = 1 
        \end{align*}
        \item The flow arrives at $p=v_h$ at hop $h$.
        If $h=1$, $f_{s,p,t,1}=1$, and otherwise $f_{v_{h-1},p,t,h}=1$. In both cases \eqref{const:sink_khop} is satisfied. 
        \item \eqref{const:relay_khop1}: for any relay node $i \in \mathcal{V}\setminus \{s,p\}$, if $i=v_1$, $f_{s,v_1,t,1}^{(\ell)} = 1$ and $f_{v_1,v_2,t,2}^{(\ell)}=1$. For all the others $i\in \mathcal{V} \setminus \{s,p,v_1\}$, both sides are zero.
        \item \eqref{const:relay_khop2}: for any relay node $i \in \mathcal{V}\setminus \{s,p\}$ and $h \in \mathcal{I}_{2}^{k-1}$, if $i=v_h$, exactly one unit of flow arrives at hop $h$ and leaves at hop $h+1$: $f_{v_{h-1},v_h,h} = f_{v_h,v_{h+1},h+1}$. For all the other nodes, both sides are $0$, and in both cases the constraint is satisfied.
    \end{itemize}
    \textit{Sufficiency}. Let $(s,p) \in \mathcal{K}$ be arbitrary. We show that the existence of a flow satisfying the constraints (\ref{const:flow_khop}-\ref{const:relay_khop2}), ensures that there is a path of at most $k$ hops in $\mathcal{G}_t$ between $s$ and $p$. 
    From equation \eqref{const:sink_khop}, either :
    \begin{itemize}
        \item  $f_{s,p,t,1}>0$, in which case $b_{s,p,t}^{\text{con}}$ must be $1$ from \eqref{const:flow_khop}. In this case, there is a $1$-hop path from $s$ to $p$.
        \item $f_{j_{h-1},p,t,h}^{(\ell)} > 0$ for some $h\geq 2$. Similarly, this implies using \eqref{const:flow_khop} that $(j_{h-1},p)$ is an edge in $\mathcal{G}_t$. 
    \end{itemize}
    We can now trace backward from $j_{h-1}$.
    \begin{itemize}
        \item If $h=2$, applying \eqref{const:relay_khop1} to $i=j_1$ gives $f_{s,j_1,t,1}^{(\ell)}>0$, so $(s,j_1)$ is an edge in $\mathcal{G}_t$, yielding the path $s,j_1,p$ of 2 hops.
        \item If $h\geq 3$, applying \eqref{const:relay_khop2} to $i = j_{h-1}$ yields $j_{h-2} \in \mathcal{V}\setminus\{j_{h-1},p\}$ with $f_{j_{h-2},j_{h-1},t,h-1}^{(\ell)} >0$, and edge $(j_{h-2},j_{h-1})$ in $\mathcal{G}_t$. Continuing backward until $h=2$, we get a sequence of edges $(j_1,j_2),(j_2,j_3),\ldots, (j_{h-1},j_h)$ in the graph $\mathcal{G}_t$ such that $f_{j_{i-1},j_{i},i}^{(\ell)} >0$ for all $i \in \mathcal{I}_2^{h-1}$. Finally, applying \eqref{const:relay_khop1} to $i=j_1$, ensures that $f_{s,j_1,t,1}^{(\ell)} >0$, and thus that the edge $(s,j_1)$ is in $\mathcal{G}_t$, yielding the path $s, j_1, \ldots, j_{h-1} , p$ of $h \leq k$ hops.
    \end{itemize}
\end{proof}

Table \ref{tab:complex_k_hop} shows that the flow formulation for $k$-hop connectivity requires the same number of connectivity binary variables as the previous cases and, contrary to the method presented in \cite{hops}, does not scale with the desired diameter $k$. However, the proposed approach also requires the introduction of a substantial number of continuous variables and inequality constraints, with both quantities scaling with $k$.
\begin{table}[ht]
    \centering
    \caption{Complexity of $k$-hop connectivity formulations.}
    \label{tab:complex_k_hop}
    \renewcommand{\arraystretch}{2.0}
    \setlength{\tabcolsep}{3pt}
    \begin{tabular}{lcccc}
        \toprule
        \textbf{Method}
            & \makecell{\textbf{Binary}\\ \textbf{variables}}
            & \makecell{\textbf{Continuous}\\ \textbf{variables}}
            & \textbf{Constraints} 
            & \makecell{\textbf{Necessity \&}\\ \textbf{sufficiency}}\\
        \midrule
        \makecell[l]{Matrix\\power~\cite{hops}}
            & $O\!\left(n_a^3k\right)$
            & $0$
            & $O\!\left(n_a^3k\right)$
            & Yes \\
        Flow 
            & $\bm{O\!\left(n_a^2\right)}$
            & $O\!\left(n_a^4k\right)$
            & ${O\!\left(n_a^4k\right)}$ 
            & Yes \\
        \bottomrule
    \end{tabular}
\end{table}

\section{Computational evaluation}\label{sec3}

\begin{figure*}[!ht]
	\centering
	\includegraphics[width=0.95\textwidth]{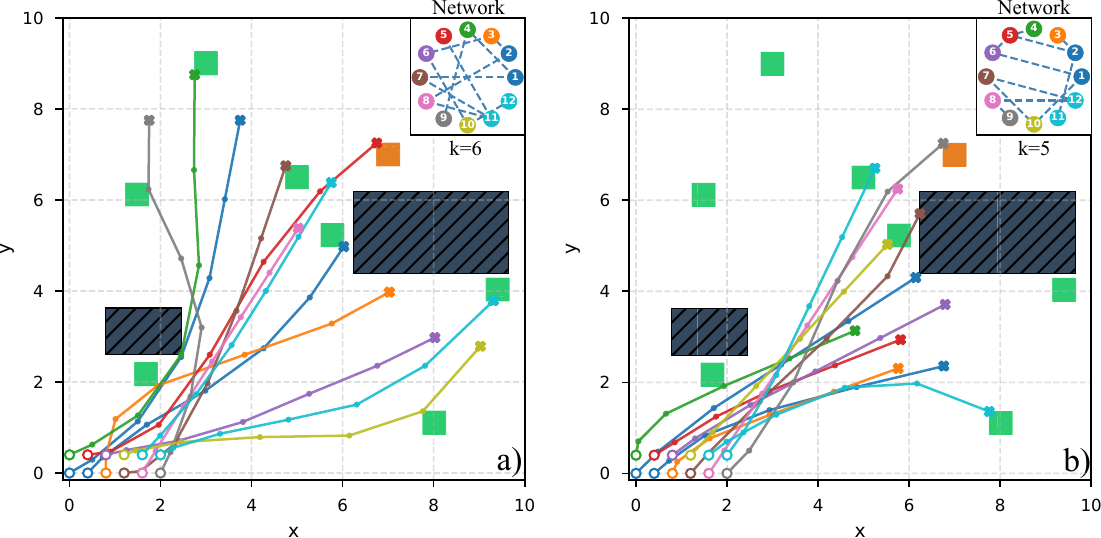}
	\caption{Trajectories and target assignment computed with models under a) flow and b) SEC standard connectivity constraints in the benchmark environment from \cite{Afonso2020}. The graph representing the communication network formed at the last time step is presented in the upper right corner of both figures.}
	\label{fig:stdcon_example}
\end{figure*}

We divide the presentation of the results in two parts. First, an illustrative example comparing the flow and SEC approaches (Figure \ref{fig:stdcon_example}) for standard connectivity. Second, a statistical evaluation over 100 randomized scenarios with increasingly larger MAS (Table \ref{tab:mc_failure} and Figure \ref{fig:results}).

In both cases, we consider a homogeneous MAS with agents' dynamics represented by double integrators (sampling period $T=1$ second). The linear velocities were bounded to the interval $[-10,10]$ and the corresponding accelerations to $[-5,5]$. The connectivity regions were modeled as squares of side 2. The optimization weights were chosen as $\gamma_i = 1, \forall i \in \mathcal{I}_1^{n_a}$ and $r_v = 10, \forall v \in \mathcal{T}^\text{opt}$.  The maximum planning horizon was set as $\bar{N}=12$. All problems were solved using the Gurobi optimizer \cite{gurobi} on a computer with an i5-1135G7 (2.4-4.2GHz) and 16GB RAM. 

\subsection{Networked MAS motion planning example}
To illustrate the functionality of the proposed motion planning and decision-making algorithm, we employ the environment presented in Figure \ref{fig:stdcon_example} for $n_a=12$, which is based on the benchmark established in \cite{Afonso2020}. For this particular example, the optimization time limit was selected as 600 seconds. Figure \ref{fig:stdcon_example}a shows the trajectories and target assignments considering the flow standard connectivity approach. The flow method enabled the MAS to reach each of the 7 optional targets in a maneuver of 6 seconds, whereas in the case of SEC (Figure \ref{fig:stdcon_example}b) only 4 optional targets were visited in a trajectory of 5 seconds. The connected communication network formed at the final step for each method is presented in the upper right corner of the corresponding figures, illustrating how the optimization enables the preservation of the communication by assigning some agents to the task of visiting targets while others serve as network relays. The final  costs of the flow and SEC formulations were -69.9 and -41.7, respectively.

\subsection{Randomized trials}
\begin{figure*}[!ht]
	\centering
	\includegraphics[width=0.98\textwidth]{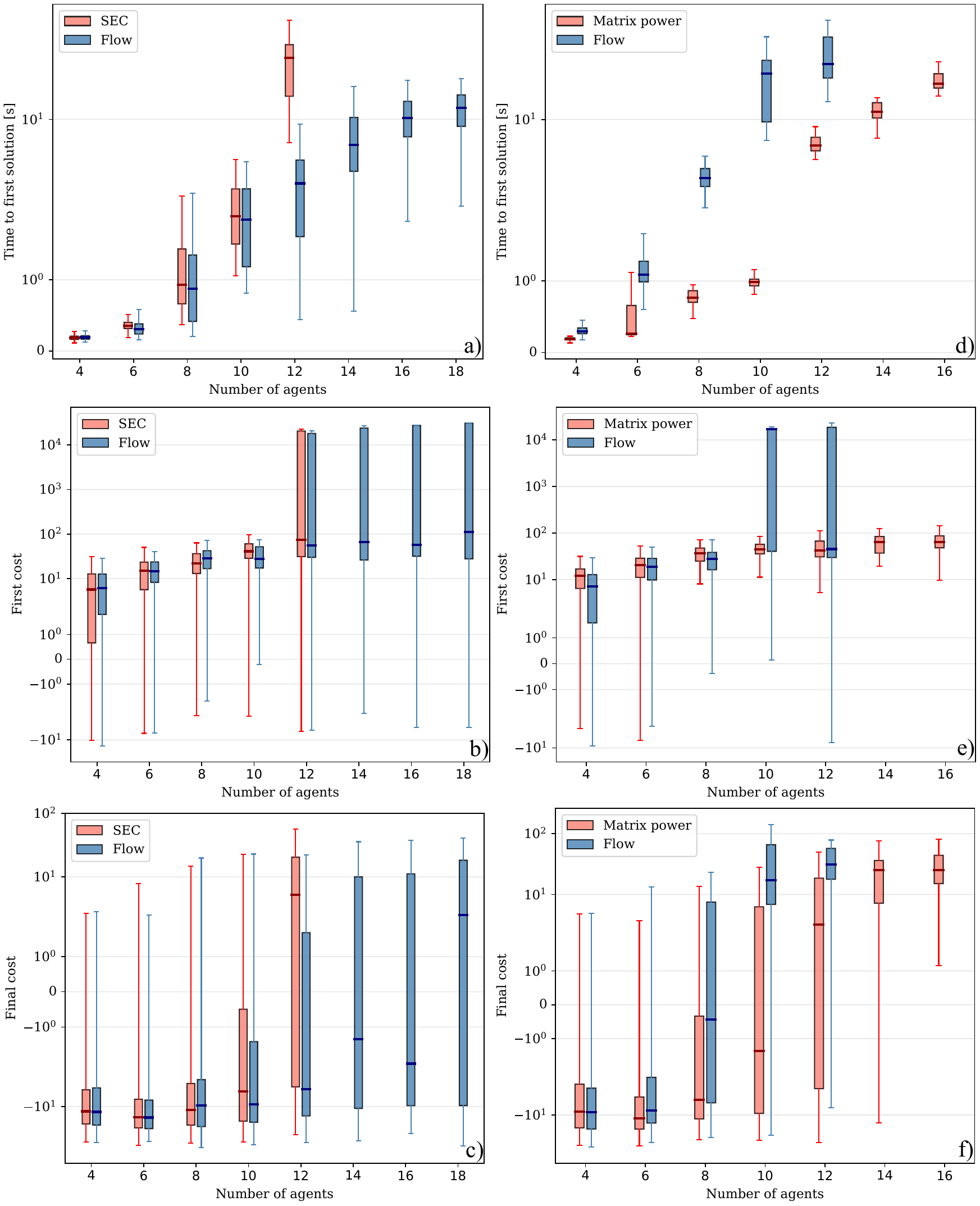}
	\caption{Statistical results of optimization times and solution quality for 100 trials and a time limit of 60 seconds. a) time to first solution; b) cost of first solution; and c) cost of the best solution found within the time limit for standard connectivity. d) time to first solution; e) cost of first solution; and f) cost of the best solution found within the time limit for $3$-hop connectivity. Plots where the rate of failure was above 50\% were omitted.}
	\label{fig:results}
\end{figure*}
We performed a statistical evaluation of the flow methods for standard and $k$-hop connectivity considering randomized environments to compare their performance with the baselines SEC \cite{Afonso2020} and the matrix power method \cite{hops} for optimizations constrained to 60 seconds considering MAS with $n_a \in  \{4,6,8,\dots,20\}$. The performance was assessed in terms of:
\begin{enumerate}[label=\alph*)]
    \item The rate of failure in finding a solution within the established time limit;
    \item The quality of the best solution (if available) found within this time window; 
    \item The time required to compute the first feasible solution;
    \item The first solution's quality.
\end{enumerate}
We note that items c) and d) are relevant metrics for receding horizon methods where feasible solutions must be computed in a relatively small time window.

The environments were generated with $n_o \sim \mathcal{U}(0,2)$, $n_t \sim \mathcal{U}(n_a-2,n_a+2)$. The positions of the obstacles and targets were determined using uniform distributions. Square obstacles with side $\mathcal{U}(0.5,2)$ were considered.  In the $k$-hop connectivity cases, the number of hops was selected as $k=3$.

Table \ref{tab:mc_failure} presents the rate of failures considering the generated scenarios for each method and connectivity type as the number of agents in the MAS increases. We consider a model intractable if the rate of failure in finding solutions within the established time limit is greater than or equal to 50\%. 
The results show that the SEC and flow formulations for standard connectivity yielded tractable models for $n_a \leq 14$ and $n_a < 18$, respectively.
Remarkably, we observe that the rate of failures almost doubles in the transition from 12 to 14 agents considering SEC, whereas the models under flow constraints presented a more controlled growth, indicating better scalability w.r.t. the size of the MAS.
\begin{table}[ht]
    \centering
    \caption{Failure rate (\%) over 100 Monte Carlo trials.}
    \label{tab:mc_failure}
    \renewcommand{\arraystretch}{1.4}
    \setlength{\tabcolsep}{4pt}
    \begin{tabular}{lrrrrrrrrr}
        \toprule
        $n_a$ & 4 & 6 & 8 & 10 & 12 & 14 & 16 & 18 & 20 \\
        \midrule
        \multicolumn{10}{l}{\textit{Standard conn.}} \\
        \quad SEC
            &  0 &  0 &  0 &  0 & 44 & 85 & 100  & 100  & 100  \\
        \quad Flow
            &  0 &  0 &  0 &  2 & 10 & 15 & 35 & 44 & 59 \\
        \midrule
        \multicolumn{10}{l}{\textit{3-hop conn.}} \\
        \quad Matrix power
            &  0 &  0 &  1 &  1 &  15 &  30 &  50 &  -- &  -- \\
        \quad Flow
            &  0 &  0 &  0 &  10 &  48 &  92 &  100 &  -- &  -- \\
        \bottomrule
    \end{tabular}
\end{table}


Figure \ref{fig:results} shows the statistical results. The data related to intractable cases (50\% or more rate of failure) were omitted due to the insufficient samples.
Figures \ref{fig:results}a, \ref{fig:results}b, and \ref{fig:results}c, show that both methods performed similarly in all metrics for $n_a \leq 10$.
However, the SEC formulation takes substantially more time to compute the first feasible solution than the flow for $n_a = 12$, and becomes intractable for larger MAS. The flow formulation consistently provides solutions within the time limit for $n_a \leq 18$. However, the variability in the value of the initial and final costs grows substantially for $n_a \geq 12$. Nevertheless, Figure \ref{fig:results}c demonstrates that the solver is able to bring the final solutions to a cost between $[-100,100]$.

In the case of $3$-hop connectivity, Table \ref{tab:mc_failure} shows that the matrix power and flow formulations provided tractable models for $n_a \leq 16$ and $n_a \leq 12$, respectively.
An approximated five-fold increase in the rate of failure was observed for the flow method in the transition from a model with $n_a=10$ to $n_a = 12$, which became intractable for $n_a=14$. 
The existing matrix power formulation outperformed flow in terms of scalability, depicting a more controlled growth in the number of failures as the size of the MAS increases. 
Figure \ref{fig:results}d illustrates the substantial time required to compute first solution under the flow scheme for $n_a\geq 8$. While the baseline generally performs better, we observe an approximate 6-fold growth in time for $n_a=12$. 
Figures \ref{fig:results}e demonstrate that for $10 \leq n_a \leq 12$ the flow method  provides initial solutions with much more variance, being able to compute better first costs in a few instances, while the baseline consistently computes solutions with cost between around 10 and 100. However, \ref{fig:results}f shows that the matrix power method provides much better capabilities of improving the initial solutions for $n_a \geq 8$.

\subsection{Discussion}

The results demonstrate that the flow encoding for standard connectivity provides substantial improvements in terms of scalability w.r.t. the size of the MAS, enabling the motion planning and decision-making problems to be solved for larger MAS within the established time limit compared to the SEC baseline.
We attribute this result to the overall reduction in the scale of the MILP model provided by the flow formulation. The number of inequality constraints required by the SEC encoding grows exponentially with the number of agents, rendering the model intractable at moderate scales. In practice, the large number of constraints severely increases the time required by the solver's presolve routines, which are a fundamental element of modern solvers, responsible for removing redundancies and tightening bounds. Indeed, for $n_a=14$, the average presolve time under SEC was 28.2 seconds compared to 0.58 seconds under the flow formulation, indicating that nearly half of the allowed optimization time was spent in presolve in the case of SEC.

In the case of $k$-hop connectivity, the main advantage of the proposed flow encoding is a substantial reduction in binary optimization variables, which dictate the worst-case number of iterations required by a branch-and-bound solver to find a global optimal solution. However, this reduction did not translate into practical performance gains within the imposed time limit. The overall increase in model size resulting from the additional continuous variables and inequality constraints resulted in worse scalability of the flow encoding compared to the matrix power baseline. In contrast to the standard connectivity case, the presolve times were comparable between the flow and matrix power formulations for $k$-hop connectivity, suggesting that the performance gap originates from the cost of solving the LP relaxations at each branch-and-bound node. This is consistent with the $O(n_a^4k)$ growth of continuous variables and inequality constraints introduced by the flow encoding, which substantially increases the scale of the corresponding LP relaxations when compared to the matrix power baseline.

\section{Conclusion}\label{sec4}

This work investigated the integration of connectivity maintenance constraints based on the concept of flow into \ac{MILP} models used for motion planning and decision-making of \ac{MAS}. 
We demonstrated that these alternative formulations encode necessary and sufficient conditions for standard and $k$-hop connectivity,
and evaluated their performance  over the baselines using a statistical analysis over randomized trials for MAS of increasing sizes and a standard branch-and-bound commercial solver.

We conclude that the flow formulation is the most promising MILP implementation of standard connectivity constraints for networked MAS under branch-and-bound solve strategies. As such, future work should be directed at exploring methods to improve the scalability of these models, e.g., distributed MILP \cite{dmilp_tutorial}.
Conversely, the flow formulation for $k$-hop connectivity did not provide benefits over the baseline under restricted optimization times. However, we still see value in the reduction of binary variables provided by the flow implementation considering distinct solve strategies. In particular, we believe that methods based on decoupling the solution of binary and continuous variables can benefit substantially from this formulation. For example, deep reinforcement learning can be used to train a model to determine only the binary variables for a mixed-integer MPC controller \cite{caio_paper}. The reduced number of variables enables the use of smaller neural networks that can be efficiently trained and potentially provide better inference performance.

\bmsubsection*{Acknowledgments}
The authors have nothing to report.

\bmsubsection*{Funding}
This study was financed, in part, by the São Paulo Research Foundation (FAPESP), Brazil. Process Numbers 2024/04703-0 and 2025/01170-4.

\bmsubsection*{Disclosure}
The authors have nothing to report.

\bmsubsection*{Conflicts of Interest}

The authors declare no conflicts of interest.

\bmsubsection*{Data availability statement}
The data that support the findings of this study are available from the corresponding author upon reasonable request.

\bibliography{wileyNJD-Chicago}

\nocite{*}

\end{document}